\newtheorem{lem}{Lemma}
\newtheorem{thm}{Theorem}
\newtheorem*{conj}{Conjecture}
\newcommand{\qG}{\mathcal{G}}
\newcommand{\tqG}{\widetilde{\mathcal{G}}}
\newcommand{\bqG}{\quat{\qG}}
\newcommand{\bqR}{\quat{\mathcal{R}}}
\newcommand{\quat}[1]{\bm{#1}}
\newcommand{\tbR}{\widetilde{\quat{R}}}
\newcommand{\tbG}{\widetilde{\quat{G}}}
\newcommand{\bbE}{\mathbb{E}\,}
\newcommand{\bq}{\quat{q}}
\newcommand{\bs}{\quat{s}}
\newcommand{\bK}{\quat{K}}
\newcommand{\tbK}{\widetilde{\quat{K}}}
\newcommand{\bt}{\quat{t}}
\newcommand{\bj}{\quat{j}}
\newcommand{\bZ}{\quat{Z}}
\newcommand{\bA}{\quat{A}}
\newcommand{\bB}{\quat{B}}
\newcommand{\bD}{\quat{D}}
\newcommand{\bX}{\quat{X}}
\newcommand{\bY}{\quat{Y}}
\newcommand{\bR}{\quat{R}}
\newcommand{\bx}{\quat{x}}
\newcommand{\by}{\quat{y}}
\newcommand{\bG}{\quat{G}}
\newcommand{\bE}{\quat{E}}
\newcommand{\sF}{\mathcal{F}}
\newcommand{\bmat}{\left(\begin{array}{cc}}
\newcommand{\emat}{\end{array}\right)}
\begin{document}

\title{\begin{Large}Universal sum and product rules for random matrices \end{Large}}
\author{Tim Rogers\footnote{Electronic address: timothy.c.rogers@kcl.ac.uk}}\affiliation{Department of Mathematics, King's College London, Strand, London WC2R 2LS, United Kingdom}

\begin{abstract}
The spectral density of random matrices is studied through a quaternionic generalisation of the Green's function, which precisely describes the mean spectral density of a given matrix under a particular type of random perturbation. Exact and universal expressions are found in the high-dimension limit for the quaternionic Green's functions of random matrices with independent entries when summed or multiplied with deterministic matrices. From these, the limiting spectral density can be accurately predicted.
\end{abstract}
\maketitle
\section{Introduction}
A central problem in random matrix theory (RMT) is, and always has been, to determine the distribution of eigenvalues of a random matrix ensemble. For a given $N\times N$ matrix $X$ the statistics of the eigenvalues $\{\lambda_i^{(X)}\}_{i=1}^N$ are captured in the spectral density
\begin{equation*}
\varrho(\lambda;X)=\frac{1}{N}\sum_i\delta\left(\lambda_i^{(X)}-\lambda\right)\,,
\end{equation*}
a generalised function of the complex variable $\lambda$. Of particular interest are those ensembles for which the spectral density converges to a non-random limit as $N\to\infty$, even more so when this limit is universal in the sense that it is independent of the distributions of the entries of the matrix. For example, in the case that matrices $\{A_N\}$ are taken to be Hermitian with otherwise independent entries satisfying a Lindeberg-type condition, the so-called `simple approach' of Pastur and collaborators \cite{Pastur1999, Khorunzhy1993, Khorunzhy1996, Khorunzhy2001} provides a robust and straightforward framework to prove the convergence of $\varrho(\lambda;A_N+D_N)$ for suitable deterministic Hermitian matrices $\{D_N\}$. The limiting spectral density in this case is given by the celebrated Pastur equation \cite{Pastur1972}. \par
Unfortunately, it has long been recognised that many of the techniques used in the analysis of Hermitian random matrices simply do not apply to their non-Hermitian counterparts (see, for example, comments in \cite{Bai1997} and \cite{Pan2007}), and universal results like the Pastur equation are notoriously hard to come by. A case in point is the famous Circular law which, after many years of research and several partial results \cite{Bai1997,Pan2007,Girko1985,Girko2004,Gotze2007,Gotze2007a}, has only recently been proved in full generality \cite{Tao2007}. \par
The purpose of this paper is to show how the methods of the simple approach can be re-engineered for use in the general setting of non-Hermitian matrices. In doing so, we will prove an extension of the Pastur equation with no Hermiticity requirements placed on either the random or deterministic part. A similar result is found for the product of a random and a deterministic matrix, a result which (for obvious reasons) has no analogue in Hermitian RMT. \par 
Amongst the numerous physical applications of non-Hermitian random matrix theory, we mention only a pedagogical example of quantum chromodynamics (QCD) with finite chemical potential. The density of states, derived first by Stephanov in \cite{Stephanov1996}, can be recovered easily as an example of the sum rule.\par
In common with many techniques in the study of Hermitian random matrices, the simple approach relies upon the well-behaved nature of the Green's function of Hermitian matrices, defined by
\begin{equation*}
G(\lambda;X)=\frac{1}{N}\sum_i\left(X-\lambda\right)^{-1}_{ii}\,.
\end{equation*}
For $\lambda$ away from the real axis (to which the eigenvalues of Hermitian matrices are confined) the Green's function is analytic and its imaginary part gives a smooth and $N$-independent regularisation of the spectral density. If $X$ is non-Hermitian however, the eigenvalues invade the complex plane and the Green's function provides no such regularisation. Instead, we have the exact relation
\begin{equation}
\varrho(\lambda;X)=-\frac{1}{\pi}\frac{\partial}{\partial\overline{\lambda}}\,G(\lambda;X)\,,
\label{rhoG}
\end{equation}
where $\partial/\partial\overline{\lambda}$ is the anti-holomorphic derivative. It has been frequently suggested to study the spectral density of non-Hermitian matrices through a regularised form of the Green's function or some related object\footnote{For instance, an electrostatic potential introduced in \cite{Sommers1988} and used frequently thereafter.}. This is usually achieved by association with an Hermitian proxy, of which there are several equally good (and often equivalent) choices. In an influential series of papers \cite{Feinberg1997,Feinberg1997a,Feinberg2001} Feinberg and Zee worked with $2N\times 2N$ block matrices of the form
\begin{equation}
H=\bmat i\varepsilon & (X-\lambda) \\ (X-\lambda)^\dag & i\varepsilon \emat\,, 
\label{FZ}
\end{equation}
a process they christened `Hermitianisation'. \par Around the same time, Janik, Nowak and collaborators proposed a similar block extension technique, obtaining a generalisation of the Green's function with a quaternionic structure \cite{Janik1997,Janik1997a}. In these and subsequent works \cite{Jarosz2004,Jarosz2006}, the application of free probability theory to this quaternionic formalism has yeilded many interesting results, including for sums of unitary random matrices \cite{Jarosz2007} and infinite products of large random matrices \cite{GudowskaNowak2003}. In this paper, we will take a different direction and instead seek to adapt the techniques of the simple approach to Hermitian RMT, obtaining results that hold for matrices with independent entries of unspecified distributions. \par
Before we procede, it should be noted that there is a significant drawback to the premise of introducing an $\varepsilon>0$ regularisation to the Green's function of a non-Hermitian matrix. Simply put, if the matrices involved are not normal, there may be parts of the complex plane far from the spectrum in which the Green's function is nevertheless very large. In practical terms, this causes great difficultly in justifying the exchange of the limits $N\to\infty$ and $\varepsilon\to0$ \cite{Khoruzhenko2003}. One route around this problem involves the analysis of least singular values of the matrices involved, and has formed a large part of recent work on the Circular law \cite{Pan2007,Gotze2007,Gotze2007a,Tao2007,Chafai2007}. A tentative link is often made to the influence of the pseudospectrum, though this idea is rarely expanded upon. \par
In the present paper no attempt is made to explicitly tackle this problem, however, we are able to offer a remarkable relation between the (regularised) quaternionic Green's function and the mean spectral density of a given matrix under a particular type of random perturbation. \par
The main results of the paper are stated in Section 2, together with some brief examples of the sum and product rules and a conjecture regarding a certain class of random matrices with an asymptotically spherical spectral density. Proofs of all the main theorems are provided in Section 3 and the final section contains a discussion of the limitations of the work and some possible directions for future research. The notation used to define and manipulate the quaternionic Green's function is introduced in the remainder of this section.
\newpage
\subsection*{Notation}
We will be doubling the size of our matrices. To simplify the formulas, we introduce the following notation:  \par
Denote by $\bX$ the $2N\times 2N$ matrix composed of $N^2$ blocks of size $2\times2$ whose $i^{\rm th}, j^{\rm th}$ block is given by
\begin{equation*}
\bX_{ij}=\bmat X_{ij} & 0 \\ 0 & \overline{X_{ji}} \emat\,.
\end{equation*}
We will always use Roman indices $i,j,k,l$ to refer to the $2\times 2$ blocks of boldface matrices, rather than the individual entries. All sums over Roman indices run from 1 to $N$. To work with quaternions, we introduce the $2\times 2$ matrix 
\begin{equation*}
\bj=\bmat 0 & i \\ i & 0 \emat\,.
\end{equation*}
Now, if $q=a+bj$ is a quaternion (i.e. $a$ and $b$ are complex numbers and $j$ is a quaternionic basis element), then we have the matrix representation
\begin{equation}
\bq=\bm{a}+\bm{b}\bm{j}=\bmat a & ib \\ i\overline{b} & \overline{a} \emat\,.
\label{matrep}
\end{equation}
This is an isomorphism, and $|q|=\|\bq\|$, where $\|\cdot\|$ is the spectral norm. When $\bq$ is a $2\times 2$ matrix, and $\bX$ a $2N\times 2N$ matrix, we use the shorthands
\begin{equation*}
\bq\bX=(\bq\otimes I_N)\bX\,,\quad\textrm{and}\quad (\bX+\bq)=\bX+\bq\otimes I_N\,.
\end{equation*}
In addition to the usual operations, we define an elementwise product for quaternions
\begin{equation*}
(a+bj)\cdot(c+dj)=ac+bdj\,.
\end{equation*}
Note that the matrix representation of an elementwise product of quaternions is not the same as the usual elementwise product of the matrices, in fact we use
\begin{equation*}
\bmat a & ib \\ i\overline{b} & \overline{a} \emat\cdot\bmat c & id \\ i\overline{d} & \overline{c} \emat=\bmat ac & ibd \\ i\overline{bd} & \overline{ac} \emat\,.
\end{equation*}

\section{Main results}
Let $X$ be an $N\times N$ matrix, $\lambda$ a complex variable and $\varepsilon$ a strictly positive real number. Putting $q=\lambda+\varepsilon j$, we define the $2N\times 2N$ `resolvent'
\begin{equation*}
\bm{\mathcal{R}}(q;X)=\left(\bX-\bq\right)^{-1}\,.
\end{equation*}
To connect with other approaches, note that there exists a permutation matrix $P$ such that
\begin{equation*}
\left(\bX-\bq\right)=P \bmat 0 & I_N \\ I_N & 0\emat H P^{-1}\,,
\end{equation*}
where $H$ is the `Hermitianised' block matrix given in (\ref{FZ}). The quaternionic Green's function of $X$ at $q$ is then defined to be the quaternion $\qG(q;X)$ with matrix representation 
\begin{equation*}
\bqG(q;X)=\frac{1}{N}\sum_i \bm{\mathcal{R}}(q;X)_{ii}\,.
\end{equation*}
Without the hypercomplex part, the quaternionic Green's function agrees with the usual Green's function, $$\qG(\lambda+0j;X)=G(\lambda;X)+0j\,.$$
Adding a positive real regulariser $\varepsilon>0$, we apply (\ref{rhoG}) to obtain from $\qG$ a regularisation of the spectral density,
\begin{equation*}
\varrho_\varepsilon(\lambda;X)=-\frac{1}{\pi}\textrm{Re}\,\frac{\partial}{\partial\overline{\lambda}}\,\qG(\lambda+\varepsilon j;X)\,.
\end{equation*}
Interestingly, this regularisation is precisely the mean spectral density of $X$ under a particular type of random perturbation.
\begin{thm}[Perturbation Formula]
Let $X$ be an arbitrary $N\times N$ matrix and $\varepsilon$ a strictly positive real number. Suppose $A$ and $B$ are random $N\times N$ matrices, with independent standard complex Gaussian entries, then
\begin{equation*}
\mathbb{E}\,\varrho(\lambda;X+\varepsilon AB^{-1})=\varrho_\varepsilon(\lambda;X)\,.
\end{equation*}
\end{thm}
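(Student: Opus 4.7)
The plan is to reduce the Perturbation Formula to a matrix identity for the expected resolvent of $Y:=X+\varepsilon AB^{-1}$, namely
\begin{equation*}
\bbE\bigl[(Y-\lambda I)^{-1}\bigr] \;=\; (X^*-\overline{\lambda}I)\bigl[(X-\lambda I)(X^*-\overline{\lambda}I) + \varepsilon^2 I\bigr]^{-1}.
\end{equation*}
A Schur-complement computation on the $2N\times 2N$ block $\bmat X-\lambda I & -i\varepsilon I \\ -i\varepsilon I & X^*-\overline{\lambda}I \emat$ (which is the permuted form of $\bX-\bq$) identifies the right-hand side with the top-left $N\times N$ block of the quaternionic resolvent $\bm{\mathcal{R}}(\lambda+\varepsilon j;X)$. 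Taking the normalised trace gives $\bbE\,G(\lambda;Y) = $ the complex part of $\qG(\lambda+\varepsilon j;X)$. Applying $-(1/\pi)\,\partial/\partial\overline{\lambda}$ to both sides and exchanging with $\bbE$ then yields the theorem via (\ref{rhoG}); the outer $\mathrm{Re}$ in the definition of $\varrho_\varepsilon$ acts trivially because $\partial_{\overline{\lambda}}G(\lambda+\varepsilon j;X)$ is automatically real-valued for $\varepsilon>0$ (its expansion reduces to $\varepsilon^2\,\mathrm{tr}$ of a Hermitian positive quantity).

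To prove the matrix identity I would factor $(Y-\lambda I)^{-1} = B[(X-\lambda I)B + \varepsilon A]^{-1} =: B R^{-1}$, with $R := (X-\lambda I)B + \varepsilon A$, and change variables from $A$ to $R$ at fixed $B$ (real Jacobian $\varepsilon^{-2N^2}$). The joint density of $(R,B)$ is then Gaussian, and completing the square in $B$ shows that conditional on $R$, the matrix $B$ is complex Gaussian with mean $B_* = \varepsilon^{-2}K^{-1}(X-\lambda I)^* R$ and precision $K := I + \varepsilon^{-2}(X-\lambda I)^*(X-\lambda I)$. The crucial observation is that
\begin{equation*}
B_* R^{-1} \;=\; \varepsilon^{-2}K^{-1}(X-\lambda I)^* \;=\; (X-\lambda I)^*\bigl[\varepsilon^2 I + (X-\lambda I)(X-\lambda I)^*\bigr]^{-1}
\end{equation*}
is \emph{independent} of $R$ (the two factors of $R$ cancel). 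Hence $\bbE[BR^{-1}\mid R]$ equals this constant matrix, and averaging over $R$ delivers the identity directly, bypassing any need to evaluate the remaining Gaussian integral in $R$.

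The main obstacle is technical rather than conceptual: justifying absolute integrability under the joint Gaussian measure so that the change of variables and Fubini apply unconditionally. The integrand has singularities on the real-codimension-two set $\{\det[(X-\lambda I)B + \varepsilon A] = 0\}$, and the matrix-Cauchy tails of $AB^{-1}$ make integrability delicate --- most simply handled by a truncation $\|B^{-1}\|\leq M$ together with dominated convergence as $M\to\infty$, exploiting that the conditional mean $B_*R^{-1}$ is bounded uniformly in $R$. Once the matrix identity is in hand, the subsequent passage to the distributional statement $\bbE\,\varrho(\lambda;Y) = \varrho_\varepsilon(\lambda;X)$ and the interchange of $\bbE$ with $\partial_{\overline{\lambda}}$ are standard.
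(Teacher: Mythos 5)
Your proposal is correct, and it proves the same key matrix identity as the paper --- namely that $\bbE\,(\varepsilon AB^{-1} + W)^{-1} = (W^\dag W + \varepsilon^2)^{-1}W^\dag$ for $W = X - \lambda$ --- but by a genuinely different mechanism. The paper reaches this identity through a symmetry argument: it normalises the matrix M\"obius transformation $f(Z)=(W^\dag Z - \varepsilon)(\varepsilon Z + W)^{-1}$ to a map $\tilde f$ induced by a matrix $\tilde F\in\mathrm{SU}(2N)$, observes that the joint Gaussian density of $(A,B)$ is invariant under multiplication by $\tilde F$, hence the law of $AB^{-1}$ is invariant under $\tilde f$, and closes by noting $\bbE\,AB^{-1}=0$. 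Your route is instead a direct probabilistic computation: change variables from $A$ to $R = WB + \varepsilon A$, complete the square to identify the conditional law of $B$ given $R$, and observe the pivotal cancellation $B_*R^{-1} = \varepsilon^{-2}K^{-1}W^\dag R\,R^{-1} = (\varepsilon^2 + W^\dag W)^{-1}W^\dag$, a constant matrix, so the tower property gives the answer without evaluating any remaining integral. The paper's version is shorter and highlights the Cauchy-on-the-Riemann-sphere structure (which motivates the Spherical Law conjecture); yours is more elementary, makes the origin of the formula transparent, and would adapt more readily to non-standard covariance structures for $A$ and $B$. Your side remark that $\partial_{\overline\lambda}$ of the complex part reduces to $\varepsilon^2$ times the trace of a positive Hermitian quantity checks out: with $M=(W^\dag W+\varepsilon^2)^{-1}$ one finds $\partial_{\overline\lambda}\alpha = -\varepsilon^2 N^{-1}\mathrm{Tr}\,[M(WW^\dag+\varepsilon^2)^{-1}]$, which is real and negative, so $\varrho_\varepsilon>0$ as it must be. The integrability caveat you raise is legitimate but no worse than in the paper's argument (both need something like $\bbE\,\|R^{-1}\|<\infty$, which follows since $R$ has a bounded joint density); a truncation as you suggest, or the bounded-density anti-concentration bound on the least singular value, handles it cleanly.
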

A short proof is presented in the next section, using a matrix generalisation of the M\"{o}bius transformation. As a corollary of Theorem 1 we see that mean spectral density of the matrices $AB^{-1}$ is, regardless of their size, the uniform distribution on the Riemann Sphere\footnote{In fact, the full jdpf of eigenvalues for matrices of this type was found recently in \cite{Krishnapur2009}.}. We are immediately prompted to ask if this result holds in the limit $N\to\infty$ for any distribution of the entries of $A_N$ and $B_N$, in an analogue of the circular law:
\begin{conj}[The Spherical Law]
Let $\{A_N\}$ and $\{B_N\}$ be sequences of matrices of independent complex random variables of zero mean and unit variance. Then the spectral densities of the matrices $A_NB_N^{-1}$ converge to the uniform density on the Riemann sphere.
\end{conj}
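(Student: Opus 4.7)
The plan is to derive the Spherical Law from the paper's universal product rule (Section~2) combined with a least-singular-value estimate of circular-law type. Note that Theorem~1 already settles the Gaussian case at every finite $N$: setting $X=0$ and rescaling $\varepsilon$, the conjecture reduces in that case to the statement that $\varrho_\varepsilon(\lambda;0)$ tends as $\varepsilon\to 0$ to the density of the uniform measure on the Riemann sphere, which is a direct computation. The content of the conjecture is therefore one of universality in the distribution of the entries.

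The first step is to condition on $B_N$ and treat $B_N^{-1}$ as a deterministic factor. Its singular values are the reciprocals of those of $B_N$, whose empirical distribution converges almost surely to the quarter-circle law by Marchenko--Pastur universality, regardless of the underlying entry distribution. Applying the product rule of Section~2 to the pair $(A_N,B_N^{-1})$ then yields a deterministic limit for $\bbE\bigl[\bqG(q;A_NB_N^{-1}) \bigm| B_N\bigr]$ which depends on $B_N$ only through this empirical distribution, and hence a universal limit for $\bbE\,\bqG(q;A_NB_N^{-1})$. Comparing with the Gaussian case, where the limit is already known from Theorem~1, identifies this universal limit with the quaternionic Green's function of the spherical law; applying $-\pi^{-1}\partial/\partial\overline{\lambda}$ as in (\ref{rhoG}) and letting $\varepsilon\to 0$ then recovers the uniform density on the Riemann sphere.

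The main obstacle, as with every universality statement for non-Hermitian ensembles, will be the justification of the interchange of the limits $N\to\infty$ and $\varepsilon\to 0$, and of the passage from $\bbE\,\bqG$ to $\bqG$ itself (a concentration step). Concretely, one needs uniform-in-$N$ tail bounds on the smallest singular value of the pencil $A_N-\lambda B_N$: such bounds are required both to prevent $\bqG$ from taking large values away from the true spectrum and to make sense of applying the product rule with $D_N=B_N^{-1}$, whose operator norm is controlled only probabilistically. A small-ball estimate of Rudelson--Vershynin type, combined with a $\lambda$-dependent adaptation of the Tao--Vu argument used in the proof of the Circular Law, appears to be the natural route, and I expect essentially all of the technical difficulty to lie in this step.
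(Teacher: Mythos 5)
The statement you are addressing is labeled a \emph{Conjecture} in the paper and is deliberately left unproved: the author remarks only that it ``can be easily derived in a non-rigorous fashion using the techniques developed below, however a full proof is likely to require a more in-depth analysis.'' There is therefore no proof in the paper to compare your argument against; what you have written is a sketch of a possible attack on an open problem, and it follows essentially the heuristic the paper alludes to, namely conditioning on $B_N$ and feeding $D_N=B_N^{-1}$ into the product rule. You also identify the right central obstacles (the interchange of $N\to\infty$ and $\varepsilon\to 0$, and least-singular-value control).

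Two concrete gaps in the sketch should nonetheless be flagged. First, the assertion that the conditional limit of $\bbE\bigl[\bqG(q;A_NB_N^{-1})\mid B_N\bigr]$ depends on $B_N$ only through the empirical distribution of its singular values is false. Theorem 3 consumes the quaternionic Green's functions $\qG_D$ and $\qG_{D^{-1}}$, which at the point $q=\lambda+\varepsilon j$ encode the singular value distributions of $B_N^{-1}-\lambda I$ and $B_N-\lambda I$ for \emph{every} $\lambda$; Marchenko--Pastur gives only the $\lambda=0$ slice, and the full $\lambda$-family is essentially equivalent to the circular law for $B_N$ itself, a much stronger (and here unsupplied) input. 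Second, assumption D3 requires a deterministic uniform bound $d$ on $\|D_N\|$ and $\|D_N^{-1}\|$, and Theorem 3 further requires $\varepsilon>2d$. With $D_N=B_N^{-1}$ both norms diverge almost surely, so effectively $d=\infty$ and there is no admissible $\varepsilon$ at which to invoke the theorem as stated. You note that the norm is ``controlled only probabilistically,'' but this is not merely a technicality to be handled later: before anything else one would need a genuinely new, quantitative version of the product rule whose error terms are expressed via small-ball probabilities for the singular values of $B_N$ and the pencil $A_N-\lambda B_N$ rather than via worst-case operator norms. Proving such a variant is where the real work lies, and it is precisely what the paper declines to undertake.
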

This phenomenon was also noticed by Forrester and Mays \cite{Forrester2009} and can be easily derived in a non-rigorous fashion using the techniques developed below, however a full proof is likely to require a more in-depth analysis.\par
The other main results of this paper concern the quaternionic Green's function itself, and specifically the ease with which universal predictions can be made about the limiting regularised spectral density of sums and products of random and deterministic matrices. \par
Suppose we are in possession of an infinite array of complex random variables $\{\xi_{ij}\}_{i,j=1}^\infty$, with joint probability space $(\Omega,\sF,\mathbb{P})$. We assume the $\xi_{ij}$ to have the following properties:
\begin{description}
\item[A1)] $\bbE\xi_{ij}=0$ for all $i,j$
\item[A2)] $\bbE|\xi_{ij}|^2=1$ for all $i,j$
\item[A3)] There exists a finite constant $C_\xi$ such that $\bbE|\xi_{ij}|^3<C_\xi$ for all $i,j$
\item[A4)] All $\xi_{ij}$ are independent, except for the covariance $\bbE\xi_{ij}\xi_{ji}=\tau$, where $\tau\in[0,1]$.
\end{description}
A normalised $N\times N$ random matrix $A_N$ can then be constructed by taking 
\begin{equation}
\big(A_N\big)_{ij}=\frac{1}{\sqrt{N}}\xi_{ij}\,.
\label{A}
\end{equation}
Introduced in \cite{Sommers1988}, the parameter $\tau$ controls the degree of Hermiticity of $A_N$. At $\tau=1$, we have that $A_N$ is an Hermitian Wigner-class matrix, whilst at $\tau=0$ the entries are completely independent. In our calculations, $\tau$ will only appear as the real part of the quaternion $t=\tau+j$.\par
In the results below, we will characterise the quaternionic Green's functions of the sum or product of such matrices with deterministic matrices $\{D_N\}$ satisfying some or all of the assumptions
\begin{description}
\item [D1)] The quaternionic Green's functions of $\{D_N\}$ converge pointwise to the limit $\qG_D$
\item [D2)] The quaternionic Green's functions of $\{D^{-1}_N\}$ converge pointwise to the limit $\qG_{D^{-1}}$
\item [D3)] There exists a constant $d\in[0,\infty)$ such that $\sup\|D_N\|<d$ and $\sup\|D_N^{-1}\|<d$.
\end{description}
The last point is a technical assumption made for the sake of simplicity and, as we note in a later example, may not be strictly necessary.
\begin{thm}[Sum Rule]
Let $\{A_N\}$ be a sequence of random matrices given by (\ref{A}) and let $\{D_N\}$ be a sequence of deterministic matrices satisfying D1. Fix a quaternion $q=\lambda+\varepsilon j$, where $\varepsilon>1$. Then
\begin{equation*}
\qG(q;D_N+A_N)\stackrel{\mathbb{P}}{\longrightarrow}\qG(q)\,,
\end{equation*}
where $\qG(q)$ satisfies
\begin{equation}
\qG(q)=\qG_D\Big(q+t\cdot\qG(q)\,\Big)\,,
\label{sum}
\end{equation}
\end{thm}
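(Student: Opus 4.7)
The plan is to adapt Pastur's ``simple approach'' to the quaternionic setting, reducing the random problem to a deterministic one with a shifted spectral parameter. Set $M := \bD_N + \bA_N - \bq$ and $\bG_N := \bbE\,\bqG(q;D_N+A_N)$, and introduce the shifted quaternion $\bq^\ast := \bq + \bt\cdot\bG_N$ together with the deterministic matrix $M' := \bD_N - \bq^\ast$. Writing $M = M' + \bA_N + \bt\cdot\bG_N$ (using the paper's tensor shorthand), the resolvent identity gives
\begin{equation*}
M^{-1} = (M')^{-1} - (M')^{-1}\bigl(\bA_N + \bt\cdot\bG_N\bigr) M^{-1}.
\end{equation*}
Averaging diagonal blocks and taking expectations,
\begin{equation*}
\bG_N = \bqG(\bq^\ast;D_N) - \frac{1}{N}\sum_i\bbE\bigl[(M')^{-1}\bigl(\bA_N + \bt\cdot\bG_N\bigr)M^{-1}\bigr]_{ii}.
\end{equation*}
By assumption D1, the first term converges to $\bqG_D(\bq+\bt\cdot\bG_\infty)$ along any limit point $\bG_\infty$, matching (\ref{sum}); the task is to show the cross term vanishes.

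Next, I would evaluate $\bbE[(\bA_N)_{kl}\,M^{-1}_{lm}]$. In the Gaussian case Stein's lemma (integration by parts) makes the computation immediate: the two covariances $\bbE|\xi_{kl}|^2=1$ and $\bbE\xi_{kl}\xi_{lk}=\tau$ carried by $(\bA_N)_{kl}$ combine, through the block-diagonal construction of $\bA_N$, into precisely the quaternion $\bt=\tau+\bj$. The output is, to leading order,
\begin{equation*}
\frac{1}{N}\sum_i\bbE\bigl[(M')^{-1}\bA_N M^{-1}\bigr]_{ii} = -\frac{1}{N}\sum_i\bigl[(M')^{-1}(\bt\cdot\bG_N)M^{-1}\bigr]_{ii} + o(1),
\end{equation*}
which cancels the explicit shift and leaves the self-consistent equation. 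For general distributions one replaces Stein by a Lindeberg-type swap, with (A3) controlling the error. The hypothesis $\varepsilon>1$ enters through the uniform bound $\|M^{-1}\|\leq(\varepsilon-\|\bA_N\|)^{-1}$, valid on high-probability events since standard arguments using (A1)--(A4) give $\|\bA_N\|\to 1+\sqrt{\tau}\leq 2$.

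The third ingredient is concentration: perturbing a single column of $A_N$ changes $\bqG$ by $O(1/N)$ in norm (again by the resolvent identity), so an Efron--Stein or martingale-difference bound yields variance $O(1/N)$, upgrading convergence in expectation to convergence in probability. The main obstacle is carrying out the cross-term evaluation in a distribution-free manner while respecting the non-commutativity of quaternion multiplication and the block structure: the Gaussian case is essentially mechanical, but the Lindeberg extension demands delicate bookkeeping, especially in tracking how the quaternion $\bt$ emerges from the two covariances in (A4). A secondary subtlety is uniqueness for the fixed-point equation (\ref{sum}); under D3 and $\varepsilon>1$ the map $\bG\mapsto\bqG_D(\bq+\bt\cdot\bG)$ should be a strict contraction on a suitable norm ball, so Banach's theorem identifies the limit.
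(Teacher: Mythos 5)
Your overall architecture mirrors the paper's: resolvent identity, Stein/Lindeberg integration-by-parts to evaluate $\bbE[\bA_N\bR]$, a martingale concentration bound, and a fixed-point argument. The pre-shift device (building $\bt\cdot\bG_N$ into $M'$ from the start) versus the paper's post-hoc rearrangement is a cosmetic difference — both produce the same self-consistent equation. However, there is a genuine conceptual gap in how you handle the resolvent bound and the role of $\varepsilon>1$.

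You assert that $\varepsilon>1$ enters through the bound $\|M^{-1}\|\le(\varepsilon-\|\bA_N\|)^{-1}$ on a high-probability event where $\|\bA_N\|\to 1+\sqrt{\tau}\le 2$. This cannot work: for $1<\varepsilon\le 2$ and $\tau$ near $1$ the quantity $\varepsilon-\|\bA_N\|$ is negative, so the claimed bound is vacuous exactly in the regime the theorem covers. More fundamentally, it misses the central structural fact that makes the quaternionic formalism useful at all. The block matrix $\bX-\bq$ is a permutation-conjugate of $\bigl(\begin{smallmatrix}0&I\\I&0\end{smallmatrix}\bigr)H$ with $H$ Hermitian up to the term $i\varepsilon I_{2N}$, so $\|\bqR(q;X)\|\le\varepsilon^{-1}$ \emph{unconditionally}, for every matrix $X$ and every $\varepsilon>0$ — no control on $\|\bA_N\|$ is required or invoked anywhere in the paper. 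The hypothesis $\varepsilon>1$ is not there to bound the resolvent; it is there so that the map $g\mapsto\qG_D(q+t\cdot g)$ is a strict contraction with parameter $\varepsilon^{-1}$ (its Lipschitz constant comes from the same unconditional resolvent bound applied with the shifted hypercomplex part), which gives existence and uniqueness of the fixed point and lets one pass the error term $k_N\to0$ through the iteration. Your invocation of D3 for the contraction step is likewise off: Theorem 2 assumes only D1, and the contraction property comes purely from the resolvent bound, not from any norm control on $D_N$. As written, your argument would force you to restrict to $\varepsilon>2$ (or worse) and would import an unnecessary hypothesis; the paper's route avoids both problems precisely because it leans on the unconditional $\varepsilon^{-1}$ bound, which is the whole point of the Hermitianized/quaternionic resolvent.
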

Theorem 2 is a straightforward generalisation of the Pastur equation for the sum of deterministic and random Hermitian matrices, indeed at $\tau=1$ and $\varepsilon\to0$, equation (\ref{sum}) precisely \emph{is} the Pastur equation. For the case $\tau=0$, an equivalent result has already been found using potential theory \cite{Khoruzhenko1996}.
\begin{thm}[Product Rule]
Let $\{A_N\}$ be a sequence of random matrices given by (\ref{A}) and let $\{D_N\}$ be a sequence of deterministic matrices satisfying D1-D3. Fix a quaternion $q=\lambda+\varepsilon j$, where $\varepsilon>2d$. Then
\begin{equation}
\qG(q;D_NA_N)\stackrel{\mathbb{P}}{\longrightarrow}\qG(q)=-\big(t\cdot\tqG\big)^{-1}\qG_D\Big(-q\,\big(t\cdot\tqG\big)^{-1}\Big)\,,
\label{prodtrue}
\end{equation}
where $\tqG$ satisfies
\begin{equation}
\tqG=-q^{-1}\qG_{D^{-1}}\Big(-\big(t\cdot\tqG\big)\,q^{-1}\Big)\,.
\label{prod}
\end{equation}
\end{thm}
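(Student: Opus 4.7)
The strategy is to reduce the product problem to an augmented sum-type problem amenable to the same simple-approach machinery used for the Sum Rule (Theorem~2), and then to decode the resulting coupled fixed-point system as (\ref{prodtrue})-(\ref{prod}). The first step is the Schur-complement identity: $(D_N A_N - \lambda I)^{-1}$ is the $(1,1)$-block of the inverse of
\begin{equation*}
\begin{pmatrix} -\lambda I & D_N \\ -A_N & I \end{pmatrix},
\end{equation*}
a matrix that is \emph{linear} in $A_N$. Doubling into the quaternionic framework converts the product problem into a block-augmented resolvent problem with the random part entering additively, so that the tools developed for Theorem~2 can be redeployed.

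Next I would apply the minor-expansion / Sherman--Morrison argument that underlies Theorem~2: fix a block index $k$, write the diagonal block of the augmented resolvent in terms of the minor resolvent obtained by deleting block row and column $k$, and Taylor expand to isolate the dependence on the $k$-th row and column of $\bA_N$. The resulting bilinear forms in the $\xi_{ij}$ concentrate around their expectations at rate $O(N^{-1/2})$ by the moment bounds A2--A3, while the covariance structure A4 forces $\tau$ to enter precisely through the elementwise quaternionic product with $t = \tau + j$, producing the $t \cdot (\cdot)$ factors in the limiting equation. The block decomposition of the augmented resolvent then naturally yields two coupled equations: one for the diagonal block, which is $\qG$, and one for an off-diagonal block, which I would identify with the auxiliary variable $\tqG$. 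Evaluating the deterministic pieces using D1 and D2 produces exactly the M\"obius-shifted arguments $-q(t\cdot\tqG)^{-1}$ and $-(t\cdot\tqG)q^{-1}$ of (\ref{prodtrue}) and (\ref{prod}); the reappearance of $D_N^{-1}$ arises because the Schur complement of the $I$-block in the linearization implicitly inverts $D_N$.

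Finally, the bound $\varepsilon > 2d$, together with D3, ensures that the implicit map defining $\tqG$ is a contraction on an appropriate closed ball of quaternions, so existence and uniqueness of $\tqG$ follow, and the concentration estimates upgrade the convergence to convergence in probability. I expect the main obstacle to be not the concentration machinery (which is standard and already present in the proof of Theorem~2), but the algebraic bookkeeping: identifying $\tqG$ as the correct auxiliary block, and verifying that the Schur complements combine with the covariance structure to yield precisely the non-commutative M\"obius-like arguments of $\qG_D$ and $\qG_{D^{-1}}$ in the stated order. Since $\bq$ and $\bD_N$ do not commute in general, careful tracking of the order of factors will be essential throughout.
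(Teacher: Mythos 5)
Your proposal takes a genuinely different route from the paper, and it also contains a couple of misconceptions worth correcting.

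The paper does \emph{not} linearize $D_N A_N$, and Theorem~2 is not proved by a minor-expansion / Sherman--Morrison (cavity) argument. The engine behind both sum and product rules is Lemma~2, which is a Stein-type integration-by-parts identity (formula (\ref{parts})): for $\bR=(\bX\bA_N-\bq)^{-1}$ one shows $\bbE(\bA_N\bR) = -\big(\bbE(\bt\cdot\tbG)\big)\bbE\bR + o(1)$, with $\tbG = N^{-1}\sum_i(\bR\bX)_{ii}$. The product rule is then obtained directly: write the resolvent identity as $\bbE\bR_N + \bq^{-1} = \bq^{-1}\bD_N\,\bbE(\bA_N\bR_N)$ with $X=D_N$, substitute Lemma~2, and rearrange to get (\ref{EG}); repeat for $\tbR_N = \bR_N\bD_N$ to get (\ref{EtG}), which is where $\qG_{D^{-1}}$ enters after pulling $\bD_N^{-1}$ out of $(\bD_N^{-1}\bq + \bbE(\bt\cdot\tbG_N))^{-1}$. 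The auxiliary quaternion $\tqG$ is thus not an off-diagonal block of an augmented resolvent; it is the normalized block-trace of $\bR_N\bD_N$, i.e.\ the quaternionic Green's function of $D_N A_N$ ``weighted'' by $D_N$. Self-averaging (Lemma~1) and the contraction property (here needing $\varepsilon>2d$ because of the $\bq^{-1}\bD_N$ prefactor) then close the argument exactly as in Theorem~2.

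Your linearization idea --- viewing $(D_NA_N-\lambda)^{-1}$ as the $(1,1)$ block of $\bigl(\begin{smallmatrix}-\lambda I & D_N \\ -A_N & I\end{smallmatrix}\bigr)^{-1}$ and then running a cavity argument --- is a legitimate alternative strategy, and it is indeed how such problems are handled in some of the free-probability and operator-valued literature. What you buy is that $A_N$ enters additively, so you could in principle reuse a sum-type lemma verbatim. What you lose is that the deterministic part of the linearized matrix is highly structured, and after quaternionic doubling you are working with a $4N\times4N$ resolvent whose diagonal block-averages do not decouple into $\qG_D$ and $\qG_{D^{-1}}$ without substantial extra algebra. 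Your proposal leaves precisely this algebra unverified: you assert that the Schur complement of the $I$-block ``implicitly inverts $D_N$'', but you never derive the arguments $-q(t\cdot\tqG)^{-1}$ and $-(t\cdot\tqG)q^{-1}$ or check that the two Schur complements really reproduce both limits D1 and D2 in the correct non-commutative order. Until that is done, the proposal is a plausible plan rather than a proof; the paper's direct route through Lemma~2 gets there considerably more cheaply because it never leaves the $2N\times2N$ quaternionic framework.
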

Unlike Theorem 2, this result is not related to any in Hermitian RMT for the simple reason that the space of Hermitian matrices is not closed under multiplication. 
\subsection*{Examples}
The statement of the sum and product rules given above concerns the behaviour of the quaternionic Green's function in the limit $N\to\infty$, for a fixed regulariser $\varepsilon$, which is taken to be large. In light of Theorem 1 we are in effect computing the limiting spectral density of matrices under a large perturbation. However, as the following examples will demonstrate, accurate predictions about the spectral densities of sums and products of matrices satisfying conditions A1-A4 can be found by naively taking $\varepsilon=0$ in the final equations.\par
\textbf{The Elliptic Law:}\\
The well-known elliptic law occurs naturally. Taking either $D_N=0$ in the sum rule, or $D_N=I_N$ in the product rule, gives $\qG(q;A_N)\stackrel{\mathbb{P}}{\longrightarrow}\qG(q)$, where
\begin{equation*}
\qG(q)=-\Big(q+t\cdot\qG(q)\,\Big)^{-1}\,.
\end{equation*}
Writing $\qG(q)=\alpha+\beta j$, we send $\varepsilon\to0$ and assume that $\beta$ stays strictly positive, obtaining
\begin{equation}
(\alpha+\beta j)(\lambda+\tau\alpha+\beta j)+1=0\,.
\label{ellip}
\end{equation}
The support of the spectral density is restricted to the region allowing a solution with $\beta>0$. The hypercomplex part of (\ref{ellip}) gives
\begin{equation*}
\alpha=-\left(\frac{x}{\tau+1}+i\frac{y}{\tau-1}\right)\,,
\end{equation*}
where $\lambda=x+iy$, and the complex part gives 
\begin{equation*}
\beta=\sqrt{1-\left(\frac{x}{\tau+1}\right)^2-\left(\frac{y}{\tau-1}\right)^2}\,.
\end{equation*}
The condition $\beta>0$ determines the elliptic support, and taking an anti-holomorphic derivative yields the spectral density inside that region:
\begin{equation*}
\rho(\lambda)=\begin{cases}\frac{1}{\pi(1-\tau^2)}\quad&\textrm{when}\quad \left(\frac{x}{\tau+1}\right)^2+\left(\frac{y}{\tau-1}\right)^2<1\\0&\textrm{otherwise.}\end{cases}
\end{equation*}
\textbf{Sum Rule:}\\
Let us consider the following random matrix model for the Dirac operator in QCD with finite chemical potential, considered in \cite{Stephanov1996}, 
\begin{equation}
D=\bmat 0 & iA_N+\mu \\ iA_N^\dag +\mu & 0 \emat\,,
\label{steph}
\end{equation}
where $A_N$ is drawn from the Gaussian Unitary Ensemble and $\mu>0$ is the chemical potential. Multiplying by an imaginary unit and filling the diagonal blocks, we suppose that in the limit $N\to\infty$, the spectral density of (\ref{steph}) may be recovered from that of $A_N+iM_N$, where 
\begin{equation*}
M_N=\bmat 0 & \mu \\ \mu & 0 \emat \otimes I_{N/2}\,.
\end{equation*}
For the quaternionic Green's function, the sum rule states that $\qG(q;A_N+iM_N)\stackrel{\mathbb{P}}{\longrightarrow}\alpha+\beta j$, where equation (\ref{sum}) now reads
\begin{equation}
\alpha+\beta j = \frac{1}{2}\Big(i\mu-q-\alpha-\beta j\Big)^{-1}-\frac{1}{2}\Big(i\mu+q+\alpha+\beta j\Big)^{-1}\,.
\label{sumex}
\end{equation}
Assuming the limit $\varepsilon\to0$, we write $q=\lambda$. The support of spectral density is then given by the region allowing a solution of (\ref{sumex}) with $\beta>0$, in this case determined by the condition
\begin{equation*}
1-\frac{1}{4}x^2-\frac{1}{4}y^2\Big(1+2\big(y^2-\mu^2\big)\Big)^2\big(y^2-\mu^2\big)^{-2}+\big(y^2-\mu^2\big)>0\,.
\end{equation*}
Inside this region, one may solve for $\alpha$ and take the anti-holomorphic derivative to determine the spectral density
\begin{equation}
\rho(\lambda)=\frac{1}{4\pi}\left(\frac{y^2+\mu^2}{(y^2-\mu^2)^2}-1\right)\,.
\label{QCDrho}
\end{equation}
As expected, this is precisely the density first recovered by Stephanov in \cite{Stephanov1996}, rotated by $\pi/2$. In that work, only the mean of the spectral density was computed, however the sum rule tells us that the quaternionic Green's function converges in probability as $N\to\infty$, suggesting weak convergence in spectral density. Moreover, only Gaussian distributed random matrices were considered in \cite{Stephanov1996}, whereas the sum rule predicts the limiting density to be universal in the sense that it is independent of distribution of the entries of $A_N$. An analogue of the construction (\ref{steph}) for orthogonal and symplectic ensembles has been studied numerically in \cite{Halasz1997}, where, interestingly, the limiting densities were found to be different to those for the unitary ensemble.\par
\textbf{Product Rule:}\\
We compute the limiting spectral density for the product matrix $D_NA_N$, where the $A_N$ are given by (\ref{A}) with $\tau=0$, and $D_N$ is a diagonal matrix with entries $D_{ii}$ drawn independently from the standard Cauchy distribution\footnote{Notice that although this choice fails the technical assumption D3, the result still appears to hold.}. With this choice of $D_N$, we have the limits
\begin{equation*}
\qG(q;D_N)\to\qG_D(q)=\frac{1}{\pi}\int_{-\infty}^\infty \frac{1}{1+r^2}(r-q)^{-1}{d}r\,,
\end{equation*}
and $\qG(q;D_N^{-1})\to\qG_D(q)$, as $N\to\infty$. As before, we assume the $\varepsilon\to0$ limit, taking $q=\lambda$. Then $\tqG(q)=\widetilde{\alpha}+\widetilde{\beta} j$, where the product rule (\ref{prod}) gives
\begin{equation*}
\widetilde{\alpha}+\widetilde{\beta} j=-\frac{1}{\pi}\int_{-\infty}^\infty \frac{1}{1+r^2}\left(\frac{\lambda}{r}+\widetilde{\beta} j\right)^{-1}{d}r\,.
\end{equation*}
Performing the integral and solving for $\widetilde{\beta}$, we obtain $\widetilde{\beta}=\big(-|\lambda|+\sqrt{|\lambda|^2+4}\big)/2$. Returning to (\ref{prodtrue}), we reach
\begin{equation*}
\qG(q;D_NA_N)\stackrel{\mathbb{P}}{\longrightarrow}-\frac{1}{\pi}\int_{-\infty}^\infty \frac{1}{1+r^2}\Big(\lambda+r\widetilde{\beta} j\Big)^{-1}{d}r=\frac{2\overline{\lambda}}{|\lambda|^2+|\lambda|\sqrt{|\lambda|^2+4}}\,,
\end{equation*}
and finally an expression for the limiting spectral density,
\begin{equation*}
\rho(\lambda)=\frac{1}{\pi}\left(\frac{1}{|\lambda|^2+|\lambda|\sqrt{|\lambda|^2+4}}-\frac{1}{|\lambda|^2+|\lambda|\sqrt{|\lambda|^2+4}+4}\right)\,.
\end{equation*}
To provide an effective comparison with numerical data, we change variables to $\gamma=\ln|\lambda|$, whose distribution is given by the pdf
\begin{equation}
\nu(\gamma)=2\frac{e^\gamma\sqrt{e^{2\gamma}+4}-e^{2\gamma}}{e^\gamma\sqrt{e^{2\gamma}+4}+e^{2\gamma}+4}\,.
\label{nugamma}
\end{equation}
Figure \ref{prodfig} shows a histogram of the log-moduli of the eigenvalues of a single such random matrix of size $N=10,000$ alongside the predicted density $\nu(\gamma)$.
\begin{figure}
\begin{center}
\includegraphics[trim=0 50 0 0, width=\textwidth]{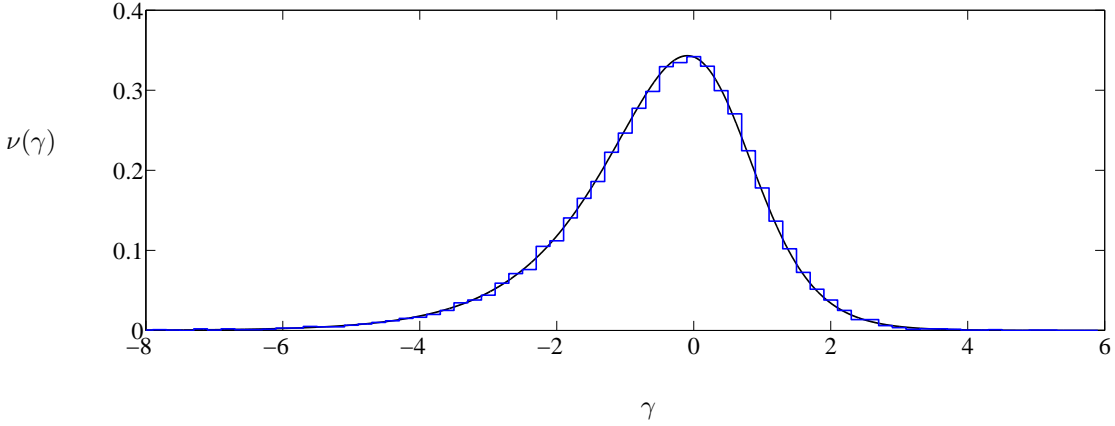}    
\put(-220,-30){$\gamma$}
\put(-460,70){$\nu(\gamma)$}
\end{center}
\caption{Blue histogram: the log-moduli of eigenvalues of a single random matrix $DA$ of size $N=10,000$, where the entries of $A$ are independent complex Gaussians with variance $1/N$, and $D$ is a diagonal matrix of Cauchy random variables. Black line: the density predicted by the product rule (Theorem 3), given in this case by equation (\ref{nugamma}) in the text.}
\label{prodfig}
\end{figure}

\section{Proof of main results}

\subsection{Proof of perturbation formula}
\begin{proof}[Proof of Theorem 1]
For the case $N=1$, the ratio of two standard complex Gaussian random variables takes the uniform density on the Riemann sphere. One proof of this fact comes from the observation that the density generated is invariant under a class of M\"{o}bius transformations. We generalise this idea to matrices. Begin by noting that, for an arbitrary $N\times N$ matrix $X$,
\begin{equation*}
\varrho_\varepsilon(\lambda;X)=-\frac{1}{N}\textrm{Tr}\frac{\partial}{\partial\overline{\lambda}}\Big((X-\lambda)^\dag (X-\lambda)+\varepsilon^2\Big)^{-1}(X-\lambda)^\dag\,,
\end{equation*}
and
\begin{equation*}
\varrho(\lambda;X+\varepsilon AB^{-1})=-\frac{1}{N}\textrm{Tr}\frac{\partial}{\partial\overline{\lambda}}\big(\varepsilon AB^{-1}+X-\lambda\big)^{-1}\,.
\end{equation*}
Theorem 1 will then follow from the stronger claim that for any matrix $Y$
\begin{equation*}
\mathbb{E}\,\big(\varepsilon AB^{-1}+Y\big)^{-1}=\big(Y^\dag Y+\varepsilon^2\big)^{-1}Y^\dag\,.
\end{equation*}
A little rearrangement leads to the equivalent statement $\mathbb{E}\,f(AB^{-1})=0$, where we have introduced the matrix M\"{o}bius transformation $f$, given by
\begin{equation*}
f(Z)=\big(Y^\dag Z-\varepsilon\big)\big(\varepsilon Z+Y\big)^{-1}\,.
\end{equation*}
Notice that if $\bmat A \\ B\emat\mapsto AB^{-1}$, then $F\bmat A \\ B\emat\mapsto f(AB^{-1})$, where $F$ is the $2N\times 2N$ block matrix
\begin{equation*}
F=\bmat Y^\dag &-\varepsilon \\ \varepsilon & Y \emat\,.
\end{equation*}
It will be useful to `normalise' $f$, introducing 
\begin{equation*}
\tilde{f}(Z)=\big(Y^\dag Y+\varepsilon^2\big)^{-1/2}\big(Y^\dag Z-\varepsilon\big)\big(\varepsilon Z+Y\big)^{-1}\big(Y Y^\dag+\varepsilon^2\big)^{1/2}\,.
\end{equation*}
Then the normalised form of $F$ is given by $\tilde{F}\in \textrm{SU}(2N)$,
\begin{equation*}
\tilde{F}=\bmat Y^\dag Y+\varepsilon^2& 0\\0 &Y Y^\dag+\varepsilon^2 \emat^{-1/2} \bmat Y^\dag &-\varepsilon \\ \varepsilon & Y \emat\,.
\end{equation*}
Now, let $A$ and $B$ be independent complex Gaussian matrices, with joint pdf
\begin{equation*}
p\bmat A \\ B\emat=\frac{1}{\pi^{2N^2}}\exp\left\{-\textrm{Tr}\left[ \bmat A^\dag & B^\dag\emat \bmat A \\ B\emat\right]\right\}\,,
\end{equation*}
and let $\bmat A' \\ B'\emat=\tilde{F}\bmat A \\ B\emat$. We perform a change of variables to find that the joint pdf of $A'$ and $B'$ is given by 
\begin{equation*}
\begin{split}
\frac{1}{\pi^{2N^2}}\exp\left\{-\textrm{Tr}\left[ \bmat A^\dag & B^\dag\emat \big(\tilde{F}^\dag \tilde{F}\big)^{-1} \bmat A \\ B\emat\right]\right\}\Big|\det \tilde{F}^{-1}\Big|=p\bmat A \\ B\emat\,.
\end{split}
\end{equation*}
Thus the density $p$ is invariant under multiplication with $\tilde{F}$ and we can conclude that the distribution of the random variable $AB^{-1}$ is invariant under $\tilde{f}$. \par
So to prove that $\mathbb{E}f(AB^{-1})=0$, it will suffice to show that $\mathbb{E}AB^{-1}=0$. But $A$ and $B$ are independent, $\mathbb{E}A=0$, and it can be shown (see \cite{Edelman1988}) that $\mathbb{E}\|B^{-1}\|=\sqrt{N\pi}<\infty$, so we are done.
\end{proof}
\subsection{Preliminaries for working with the quaternionic Green's function}
For the proofs of Theorems 2 and 3, a number of standard tools will be of repeated use.
\begin{description}
\item[The resolvent identity.]
For $N\times N$ matrices $A$ and $B$, and quaternion $q$, we have
\begin{equation}
\bqR(q;A)-\bqR(q;B)=\bqR(q;A)(\bB-\bA)\bqR(q;B)\,.
\label{res-id}
\end{equation}
This is a consequence of the more general expression for any $\bX$ and $\bY$,
\begin{equation*}
\bX^{-1}-\bY^{-1}=\bX^{-1}(\bY-\bX)\bY^{-1}\,.
\end{equation*}
\item[The resolvent bound.]
For any matrix $A$ and quaternion $q=\lambda+\varepsilon j$, we have the following bound on the norm of the resolvent and its blocks
\begin{equation}
\big\|\bqR(q;A)_{ij}\big\|\leq\big\|\bqR(q;A)\big\|\leq\frac{1}{\varepsilon}\,.
\label{res-bd}
\end{equation}
Again, this is a special case of a more general result - for any matrices $X$ and $Y$, with $Y$ invertible, we have
\begin{equation*}
\big\|(\bX+\bY\bj)^{-1}_{ij}\big\|\leq\big\|(\bX+\bY\bj)^{-1}\big\|\leq \frac{1}{x+y}\,,
\end{equation*}
where $x$ and $y$ are the smallest singular values of $X$ and $Y$.
\item[The Cauchy-Schwarz inequality.] Let $\{\bx_i\}_{i=1}^N$ and $\{\by_i\}_{i=1}^N$ be collections of $2\times 2$ matrices. We have the following analogue of the Cauchy-Schwarz inequality:
\begin{equation}
\left\|\sum_i\bx_i\by_i^\dag \right\|^2\leq \left\|\sum_i\bx_i\bx_i^\dag\right\|\left\|\sum_i\by_i\by_i^\dag\right\|\,.
\label{C-S}
\end{equation}
\item[Integration by parts.] Let $f:\mathbb{C}^{n}\to\mathbb{C}$ be a continuous and differentiable (but not generally analytic) function, with bounded second order partial derivatives 
\begin{equation*}
\sup_{\vec{x},i,j}\left\{\left|\frac{\partial}{\partial x_i\partial x_j }f(\vec{x})\right|,\left|\frac{\partial}{\partial x_i\partial\overline{x_j} }f(\vec{x})\right|,\left|\frac{\partial}{\partial \overline{x_i}\partial\overline{x_j} }f(\vec{x})\right|\right\}<C_f\,.
\end{equation*}
Then if $F=f(\xi_{11},\ldots,\xi_{NN})$, we have
\begin{equation}
\bbE \xi_{ij} F=\bbE \frac{\partial F}{\partial \overline{\xi_{ij}}}+\tau\bbE \frac{\partial F}{\partial \xi_{ji}}+\bbE\xi_{ij}^2\,\bbE\frac{\partial F}{\partial \xi_{ij}}+\bbE\xi_{ij}\overline{\xi_{ji}}\,\bbE \frac{\partial F}{\partial \overline{\xi_{ji}}}+k\,,
\label{parts}
\end{equation}
where $|k|$ is bounded by some constant depending on $C_\xi$ and $C_f$. Proof is by Taylor's Theorem.
\end{description}
In addition to these general facts, the main part of the work in proving Theorems 2 and 3 comes down to the application of two central results.
\begin{lem}
Let $A_N$ be a random $N\times N$ matrix given by (\ref{A}) and let $X$, $Y$ and $Z$ be arbitrary deterministic matrices of the same size, with $Z$ invertible. Define 
\begin{equation*}
\bR=\left(\bX\bA_N+\bY+\bZ\bj\right)^{-1}\,,\quad\textrm{and}\quad\bG=\displaystyle{\frac{1}{N}\sum_i\bR_{ii}}\,.
\end{equation*}
Then 
\begin{equation*}
\bbE\Big\|\bG-\bbE\bG \Big\|^2<C N^{-1}\,,
\end{equation*}
where $C$ is a constant depending on $X,Y$ and $Z$.
\end{lem}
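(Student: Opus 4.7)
The plan is to prove this concentration estimate via a martingale difference decomposition combined with a rank-inequality bound derived from the resolvent identity. This is the standard route for showing that traces of resolvents of random matrices with independent entries self-average; the quaternionic block structure does not alter the argument in an essential way.

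First I would order the ``independent sources'' respecting the correlation structure A4: let $\mathcal{F}_k$ be the $\sigma$-algebra generated by the pairs $\{(\xi_{ij},\xi_{ji}):\max(i,j)\leq k\}$, for $k=0,\dots,N$. Then $\mathcal{F}_0$ is trivial, $\mathcal{F}_N$ contains all the randomness in $\bG$, and writing $\bbE_k[\cdot]=\bbE[\cdot|\mathcal{F}_k]$ gives the telescoping identity $\bG-\bbE\bG=\sum_{k=1}^N \bD_k$ with $\bD_k=\bbE_k\bG-\bbE_{k-1}\bG$. Applied componentwise (recall $\bG$ is a $2\times 2$ matrix), the orthogonality of martingale differences yields
\begin{equation*}
\bbE\|\bG-\bbE\bG\|^2\;\leq\;C\sum_{k=1}^N\bbE\|\bD_k\|^2\,.
\end{equation*}

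The next step is a stability estimate. Let $A^{(k)}$ be the matrix obtained from $A_N$ by resampling all entries whose row or column index equals $k$ (using an independent copy with the same joint law), and let $\bR^{(k)}$, $\bG^{(k)}$ denote the corresponding objects. Since $\bG^{(k)}$ has the same conditional law as $\bG$ given $\mathcal{F}_{k-1}$, we have $\bbE_{k-1}\bG^{(k)}=\bbE_{k-1}\bG$, and because $\bG^{(k)}$ is $\mathcal{F}_k$-measurable after conditioning on the resampled block one obtains $\bD_k=\bbE_k[\bG-\bG^{(k)}]$, so by Jensen $\|\bD_k\|^2\leq\bbE_k\|\bG-\bG^{(k)}\|^2$. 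The resolvent identity (\ref{res-id}) gives
\begin{equation*}
\bR-\bR^{(k)}\;=\;\bR\,\bX\bigl(\bA^{(k)}-\bA_N\bigr)\,\bR^{(k)}\,.
\end{equation*}
Since $A_N-A^{(k)}$ is supported on a single row and a single column, $\bA_N-\bA^{(k)}$ has rank at most $4$, so $\bR-\bR^{(k)}$ has rank at most $4$. The block trace $\sum_i(\bR-\bR^{(k)})_{ii}$ is then a $2\times 2$ matrix each of whose entries is bounded by the trace of a rank-$4$ operator of norm $\leq\|\bR\|+\|\bR^{(k)}\|$. By the resolvent bound (\ref{res-bd}) the latter is at most $2/z$, where $z$ is the smallest singular value of $Z$, yielding $\|\bG-\bG^{(k)}\|\leq C_Z/N$ with $C_Z$ depending only on $Z$ (and implicitly on $\|X\|$ through the identity above). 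Summing:
\begin{equation*}
\bbE\|\bG-\bbE\bG\|^2\;\leq\;C\sum_{k=1}^N\frac{C_Z^2}{N^2}\;=\;\frac{C'}{N}\,.
\end{equation*}

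The main obstacle is really just bookkeeping. The rank bound on $\bA_N-\bA^{(k)}$ needs a careful count because each scalar entry of $A_N$ contributes a $2\times 2$ diagonal block to $\bA_N$, and the covariance between $\xi_{ij}$ and $\xi_{ji}$ forces one to resample symmetric pairs jointly so that the martingale-difference orthogonality is not spoiled. Everything else, including the passage from the block trace to an operator-norm bound on the $2\times 2$ matrix $\bG-\bG^{(k)}$ and the verification that the constants depend only on $X$, $Y$, $Z$ (through $\|X\|$ and the smallest singular value of $Z$), is routine. No use of integration by parts (\ref{parts}) or the Cauchy--Schwarz inequality (\ref{C-S}) is required for this lemma; those tools will enter in the identification of the limit $\bbE\bG$ in the proofs of Theorems~2 and~3.
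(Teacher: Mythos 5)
Your proof is correct, and while it shares the paper's broad strategy (a martingale difference decomposition for $\bG-\bbE\bG$, with a one-step stability estimate coming from the resolvent identity), it diverges from the paper's argument in essentially every technical detail, and is arguably cleaner and strictly stronger. The paper works with the fine filtration indexed by unordered pairs $(i,j)$, $i\leq j$ (so $N(N+1)/2$ increments), compares $\bR$ against the resolvent with $\xi_{ij}=\xi_{ji}=0$, and controls the increment via the Cauchy--Schwarz inequality (\ref{C-S}), yielding a \emph{random} bound $\|\Delta_p\|\leq C_K M_{ij}N^{-3/2}$ with $M_{ij}=\max(|\xi_{ij}|,|\xi_{ji}|)$; it then invokes Burkholder's inequality at the level of the third moment and Jensen to come back down to the second moment. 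You instead use the coarse filtration indexed by $k=1,\dots,N$, compare $\bR$ against the resolvent with the entire $k$-th row and column of $A_N$ resampled, and exploit the fact that $\bA_N-\bA^{(k)}$ has rank at most $4$ (so $\bR-\bR^{(k)}$ does too), combined with the trivial bound $\|\bR-\bR^{(k)}\|\leq\|\bR\|+\|\bR^{(k)}\|\leq 2/\sigma_{\min}(Z)$ from (\ref{res-bd}), to get a \emph{deterministic} per-step bound $\|\bG-\bG^{(k)}\|\leq C/N$; martingale orthogonality then finishes the proof in one line. Two remarks on the comparison. First, because your per-step bound is deterministic, the third-moment hypothesis A3 is never touched in your argument, whereas the paper's proof consumes it through $\bbE M_{ij}^3$ — so your route shows the lemma holds under weaker hypotheses than the paper states. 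Second, a minor slip in your own accounting: the constant in your final estimate depends only on $\sigma_{\min}(Z)$ and not on $\|X\|$, since the operator-norm bound on $\bR-\bR^{(k)}$ uses only the triangle inequality and (\ref{res-bd}); the resolvent identity is used solely for the rank count. Your parenthetical ``implicitly on $\|X\|$'' is therefore unnecessary — which only makes your version of the lemma tighter.
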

This is simply a slightly more general version of the statement that the quaternionic Green's functions of the matrices we are interested in are self-averaging in the limit $N\to\infty$. This property is crucial if we are to extract useful information about the behaviour in that limit. The proof is based on an elegant martingale technique from a paper on spin-glasses \cite{Carmona2006}.
\begin{proof}
Let $P=\{(i,j):1\leq i\leq j\leq N\}$. We label these pairs by introducing the bijective numbering $(i,j)\leftrightarrow p$ where $p\in\{1,...,|P|\}$. The names $(i,j)$ and $p$ will be used interchangeably; the meaning should be clear from the context. Let $\sF_0=\sF$ and recursively define the sub-$\sigma$-algebras $\sF_p=\sigma\{\sF_{p-1},\xi_{ij},\xi_{ji}\}$. Introduce the martingale 
\begin{equation*}
\Delta_p=\bbE\left(\bG\,\big|\,\sF_p\right)-\bbE\left(\bG\,\big|\,\sF_{p-1}\right)\,,
\end{equation*}
so that
\begin{equation*}
\sum_{p=1}^{|P|}\Delta_p=\bG-\bbE\bG\,.
\end{equation*}
We plan to bound each $\Delta_p$, to do so, we consider the fictitious situation in which the blocks $\big(\bA_N\big)_{ij}$ and $\big(\bA_N\big)_{ji}$ are removed. Write $A_N^{(ij)}$ for the matrix obtained from $A_N$ by setting $\xi_{ij}=\xi_{ji}=0$. Introduce 
\begin{equation*}
\bR^{(ij)}=\left(\bX\bA_N^{(ij)}+\bY+\bZ\bj\right)^{-1}\,,\quad\textrm{and}\quad\bG^{(ij)}=\displaystyle{\frac{1}{N}\sum_k\bR^{(ij)}_{kk}}\,.
\end{equation*}
The resolvent identity (\ref{res-id}) provides
\begin{equation*}
\bR=\bR^{(ij)}-\bR\bX\left(\bA_N-\bA_N^{(ij)}\right)\bR^{(ij)}\,
\end{equation*}
and thus
\begin{equation*}
\bG=\bG^{(ij)}-\frac{1}{N}\bK_{ij}\,,
\end{equation*}
where the error term $\bK_{ij}$ is given by
\begin{equation*}
\begin{split}
\bK_{ij}&=\sum_{k}\left(\bR\bX\left(\bA_N-\bA_N^{(ij)}\right)\bR^{(ij)}\right)_{kk}\\
&=\sum_{k}\left(\big(\bR\bX\big)_{ki}\big(\bA_N\big)_{ij}\bR_{jk}^{(ij)}+\big(\bR\bX\big)_{kj}\big(\bA_N\big)_{ji}\bR_{ik}^{(ij)} \right)\,.
\end{split}
\end{equation*}
The Cauchy-Schwarz inequality (\ref{C-S}) provides a bound for $\bK_{ij}$, since, for example
\begin{equation*}
\begin{split}
\left\|\sum_k\big(\bR\bX\big)_{ki}\big(\bA_N\big)_{ij}\bR_{jk}^{(ij)}\right\|^2&\leq \left\| \sum_k\big(\bR\bX\big)_{ki}\big(\bR\bX\big)^\dag_{ki} \right\|\left\|\big(\bA_N\big)_{ij}\big(\bA_N\big)^\dag_{ij}\right\|\left\| \sum_k\bR^{(ij)}_{jk}\bR^{(ij)\dag}_{jk} \right\|\\
&=\Big\|\left(\bX^T\bR^T\bX^\dag\bR^\dag\right)_{ii}\Big\|\left\|\big(\bA_N\big)_{ij}\big(\bA_N\big)^\dag_{ij}\right\|\Big\|\left(\bR^{(ij)}\bR^{(ij)\dag}\right)_{ii}\Big\|\\
&\leq \frac{CM_{ij}^2}{N}\,,
\end{split}
\end{equation*}
where $M_{ij}=\max\{|\xi_{ij}|,|\xi_{ji}|\}$, and $C$ is the constant coming from the resolvent bound (\ref{res-bd}). We can conclude $\|\bK_{ij}\|<C_KM_{ij}N^{-1/2}$, for some constant $C_K$, and thus 
\begin{equation*}
\begin{split}
\big\|\Delta_p\big\|&=\Big\|\bbE\left(\bG\,\big|\,\sF_p\right)-\bbE\left(\bG\,\big|\,\sF_{p-1}\right)\Big\|\\
&=\Big\|\bbE\left(\bE_{ij}\,\big|\,\sF_p\right)-\bbE\left(\bE_{ij}\,\big|\,\sF_{p-1}\right)\Big\|\\
&\leq C_KN^{-3/2}\big(M_{ij}-\bbE M_{ij}\big)\,.
\end{split}
\end{equation*}
Burkholder's inequality then gives, for some constant $C_{\Delta}$,
\begin{equation*}
\begin{split}
\bbE\Big\|\bG-\bbE\bG \Big\|^3&=\bbE \left\|\sum_{p=1}^{|P|}\Delta_p\right\|^3\leq C_\Delta\,\bbE\left(\sum_{p=1}^{|P|}\big\|\Delta_p\big\|^2\right)^{3/2}\\
&\leq C_\Delta C_K^3N^{-9/2}\,\bbE\left(\sum_{i\leq j}\big(M_{ij}-\bbE M_{ij}\big)^2\right)^{3/2}\\
&<4C_\Delta C_K^3C_MN^{-3/2}\,,
\end{split}
\end{equation*}
where $C_M<2C_{\xi}+2$ is a bound for $\bbE M_{ij}^3$, and we have repeatedly used Jensen's inequality. The desired bound is then given by
\begin{equation*}
\bbE\Big\|\bG-\bbE\bG \Big\|^2<C N^{-1}\,,
\end{equation*}
where $C=(4C_\Delta C_K^3C_M)^{2/3}$.
\end{proof}
With self-averaging established, we next require a mechanism by which we can convert the general statement of the resolvent identity (\ref{res-id}) into an equation for the mean of the quaternionic Green's function.
\begin{lem}
Fix a quaternion $q=\lambda+\varepsilon j$, with $\varepsilon>1$. Let $A_N$ be a random $N\times N$ matrix given by (\ref{A}) and let $X$ be an arbitrary deterministic matrix of the same size. Define 
\begin{equation*}
\bR=\big(\bX\bA_N-\bq\big)^{-1}\quad\textrm{and}\quad\tbG=N^{-1}\sum_i\big(\bR\bX\big)_{ii}\,.
\end{equation*}
Then
\begin{equation*}
\bbE\big(\bA_N\bR\big)=-\big(\bbE (\bt\cdot\tbG)\big)(\bbE\bR)+\bK_N\,.
\end{equation*}
where $t=\tau+j$, and $\|\bK_N\|\to0$ as $N\to\infty$.
\end{lem}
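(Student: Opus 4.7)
The plan is to apply the integration-by-parts formula (\ref{parts}) entry-by-entry to $\bbE(\bA_N\bR)$ and then use Lemma 1 to factorise the resulting expectations. First I would expand the $(k,l)$-block of $\bA_N\bR$ in components. Since each $2\times 2$ block $(\bA_N)_{ki}$ is diagonal with upper entry $\xi_{ki}/\sqrt{N}$ and lower entry $\overline{\xi_{ik}}/\sqrt{N}$, every one of the four matrix entries of $(\bA_N\bR)_{kl}$ is a sum of the form $\frac{1}{\sqrt{N}}\sum_i \xi_{ki}\,\bR_{(i,1),(l,\beta)}$ or $\frac{1}{\sqrt{N}}\sum_i \overline{\xi_{ik}}\,\bR_{(i,2),(l,\beta)}$, and each summand is handled by (\ref{parts}).

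The derivatives of $\bR$ are rank-one perturbations: $\partial\bR/\partial\xi_{ab}=-\frac{1}{\sqrt{N}}\bR\bX\,E\,\bR$ with $E$ the $2N\times 2N$ matrix whose only nonzero entry is a $1$ at position $((a,1),(b,1))$, and the analogous formula for $\partial/\partial\overline{\xi_{ab}}$ places the $1$ at $((b,2),(a,2))$. Substituted into (\ref{parts}), the two leading contributions (those with coefficients $1$ and $\tau$) reduce to expressions of the form $-\frac{1}{N}\sum_i (\bR\bX)_{(i,\alpha),(i,\alpha)}\,\bR_{(k,\gamma),(l,\delta)}$ in which the inner sum is precisely the entry $\tbG_{\alpha\alpha}$. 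Lemma 1 together with the Cauchy--Schwarz inequality (\ref{C-S}) lets one factorise $\bbE[\tbG_{\alpha\alpha}\,\bR_{(k,\gamma),(l,\delta)}]=\bbE[\tbG_{\alpha\alpha}]\,\bbE[\bR_{(k,\gamma),(l,\delta)}]+O(N^{-1/2})$. A short check, entry by entry over the four components of $(\bA_N\bR)_{kl}$, then shows that the combination is exactly $-[\bbE(\bt\cdot\tbG)\,\bbE(\bR)]_{(k,\cdot),(l,\cdot)}$: the $\tau$-coefficient from the covariance $\bbE\xi_{ij}\xi_{ji}=\tau$ attaches to the diagonal block entries $\tbG_{11},\tbG_{22}$ (matching the real part of $t$), while the $1$-coefficient from $\bbE|\xi_{ij}|^2=1$ attaches to the off-diagonal entries $\tbG_{12},\tbG_{21}$ (matching the $j$ part of $t$). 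This is precisely the matrix representation of the quaternion elementwise product $\bt\cdot\tbG$.

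The error $\bK_N$ collects three remaining contributions: (i) the two non-Gaussian terms in (\ref{parts}) with coefficients $\bbE\xi_{ij}^2$ and $\bbE\xi_{ij}\overline{\xi_{ji}}$; (ii) the Taylor remainder $k$; and (iii) the self-averaging discrepancy. For (i) the key observation is that, unlike the leading terms, the index pattern does not close into a trace over $i$; instead, the sum over $i$ of products $(\bR\bX)_{(i,1),(k,\cdot)}\,\bR_{(i,\cdot),(l,\cdot)}$ is bounded uniformly by Cauchy--Schwarz (\ref{C-S}) and the resolvent bound (\ref{res-bd}), leaving the explicit $1/N$ prefactor to give $O(1/N)$. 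Contribution (ii) is $O(N^{-1/2})$ using the third-moment constant $C_\xi$ in (\ref{parts}) together with (\ref{res-bd}), and (iii) is $O(N^{-1/2})$ directly from Lemma 1. The hypothesis $\varepsilon>1$ guarantees $\|\bR\|<1$, which keeps all three estimates in a clean form. The hard part will be the combinatorial bookkeeping: there are four distinct derivatives to track ($\partial/\partial\xi_{ki}$, $\partial/\partial\overline{\xi_{ki}}$, $\partial/\partial\xi_{ik}$, $\partial/\partial\overline{\xi_{ik}}$), each producing a different block-index pattern, and one must verify that their combined leading contribution is the matrix representation of the quaternion elementwise product $\bt\cdot\tbG$ rather than any other combination of entries of $\tbG$.
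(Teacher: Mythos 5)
Your proposal is correct and follows essentially the same route as the paper. Both proofs proceed by (a) applying the integration-by-parts identity (\ref{parts}) to the generic block $\bbE(\bA_N\bR)_{ik}$, (b) computing the resolvent derivatives $\partial\bR/\partial\xi_{\alpha\beta}$ and $\partial\bR/\partial\overline{\xi_{\alpha\beta}}$ as rank-one perturbations with the $2\times2$ projectors $\bmat 1&0\\0&0\emat$ and $\bmat 0&0\\0&1\emat$, (c) recognising that the two leading terms --- coefficient $1$ from the $\partial/\partial\overline{\xi_{ij}}$ derivative and coefficient $\tau$ from the $\partial/\partial\xi_{ji}$ derivative --- close into a trace over the middle index and combine into exactly the matrix representation of $\bt\cdot\tbG$ acting on $\bR_{ik}$, (d) bounding the non-closing terms with $\bbE\xi_{ij}^2$ and $\bbE\xi_{ij}\overline{\xi_{ji}}$ via Cauchy--Schwarz (\ref{C-S}) and the resolvent bound (\ref{res-bd}), and (e) replacing $\bbE[(\bt\cdot\tbG)\bR]$ by $\bbE(\bt\cdot\tbG)\,\bbE\bR$ using the self-averaging estimate of Lemma~1. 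Your identification of which IBP coefficient attaches to which entry of $\tbG$ ($\tau$ to the diagonal, $1$ to the off-diagonal) is the correct bookkeeping that reproduces the elementwise quaternion product. The only difference from the paper is presentational: the paper works block-wise and packages the subleading correction into the quaternion $s_{ij}=\bbE\xi_{ij}^2+(\bbE\xi_{ij}\overline{\xi_{ji}})j$, whereas you trace through the four scalar entries of each $2\times2$ block; these are the same computation.
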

\begin{proof}
We compute a generic block
\begin{equation}
\bbE\big(\bA_N\bR\big)_{ik}=\frac{1}{\sqrt{N}}\bbE\sum_j\bmat\xi_{ij}&0\\0&\overline{\xi_{ji}} \emat_{ij}\bR_{jk}\,.
\label{gen}
\end{equation}
Notice that the entries of $\bR$ depend continuously upon the $\xi$, and the resolvent bound gives a bound on the first and higher order derivatives. We are thus able to apply the integration by parts formula (\ref{parts}). The derivatives are given by
\begin{equation*}
\frac{\partial\bR}{\partial\overline{\xi_{\alpha\beta}}}=-\frac{1}{\sqrt{N}}\bR\bX\bmat 0&0\\0&1 \emat \bm{1}_{\beta\alpha}\bR\quad\textrm{and}\quad \frac{\partial\bR}{\partial\xi_{\alpha\beta}}=-\frac{1}{\sqrt{N}}\bR\bX\bmat 1&0\\0&0 \emat \bm{1}_{\alpha\beta}\bR\,,
\end{equation*}
where $\bm{1}_{\alpha\beta}$ is the $2N\times 2N$ block matrix containing a copy of $I_2$ in block $(\alpha,\beta)$ and zeros elsewhere. Applying this to (\ref{gen}), we thus find, after some tedious algebra, 
\begin{equation*}
\begin{split}
\bbE\big(\bA_N\bR\big)_{ik}&=-\frac{1}{N}\sum_j\left[\left(\bt\cdot\big(\bR\bX\big)_{jj}\right)\bR_{ik}+\left(\bs_{ij}\cdot\big(\bR\bX\big)_{ji}\right)\bR_{jk}\right]\\
&=-\bbE \big((\bt\cdot\tbG)\bR_{ik}\,\big)-\frac{1}{N}\sum_j\left(\bs_{ij}\cdot\big(\bR\bX\big)_{ji}\right)\bR_{jk}\,,
\end{split}
\end{equation*}
where $s_{ij}=\bbE \xi_{ij}^2+(\bbE\xi_{ij}\overline{\xi_{ji}})j$. Notice that assumptions A1-A4 imply a universal bound for $|s_{ij}|$, and the Cauchy-Schwarz inequality together with the resolvent bound give a constant $C$ such that
\begin{equation*}
\frac{1}{N}\left\|\sum_j\left(\bs_{ij}\cdot\big(\bR\bX\big)_{ji}\right)\bR_{jk}\right\|<CN^{-1}\,.
\end{equation*}
Finally, we split the expectation
\begin{equation*}
\left\|\bbE \big((\bt\cdot\tbG)\bR\,\big)-\big(\bbE (\bt\cdot\tbG)\big)(\bbE\bR)\right\|\to0\quad\textrm{as}\quad N\to\infty,
\end{equation*}
since $\bbE\big\|\tbG-\bbE\tbG \big\|^2\to0$ by Lemma 1, $|t|<\infty$ and $\|\bR\|$ is bounded.
\end{proof}

\subsection{Proof of sum and product rules}
\begin{proof}[Proof of Theorem 2]
Fix $q=\lambda+\varepsilon j$ with $\varepsilon>1$. We write the shorthands $\bR_N=\bqR(q;D_N+A_N)$, $\bG_N=\bqG(q;D_N+A_N)$ and $\qG_N(q)=\qG(q;D_N+A_N)$. Now, applying the resolvent identity and Lemma 2, we obtain
\begin{equation*}
\begin{split}
\bbE\bR_N-(\bD_N-\bq)^{-1}&=-(\bD_N-\bq)^{-1}\bbE\bA_N\bR_N\\
&=(\bD_N-\bq)^{-1}\big(\bbE (\bt\cdot\bG_N)\big)(\bbE\bR_N)+\bK_N\,,
\end{split}
\end{equation*}
where $\|\bK_N\|\to0$ as $N\to\infty$. Rearranging, we have
\begin{equation*}
\bbE\bR_N=\Big(\bD_N-\bq-\bt\cdot\bbE\bG_N\Big)^{-1}+\bK_N'\,,
\end{equation*}
where the resolvent bound gives $\|\bK_N'\|\to0$ as $N\to\infty$ also. Summing over the diagonal blocks, we deduce
\begin{equation}
\bbE\qG_N(q)=\qG\Big(q+t\cdot\bbE\qG_N(q);D_N\Big)+k_N\,,
\label{EqGN}
\end{equation}
with $|k_N|\to0$ as $N\to\infty$. Define the functions 
\begin{equation*}
f_N(g)=\qG_{D_N}\Big(q+t\cdot g\Big)+k_N\,,\qquad f(g)=\qG_D\Big(q+t\cdot g\Big)\,.
\end{equation*}
Since $\varepsilon>1$, the resolvent bound gives that each $f_N$ is a contraction with parameter $\varepsilon^{-1}$, and thus the pointwise limit $f$ is also; it is therefore continuous, with a unique fixed point which we call $\qG(q)$. Finally, Lemma 1 gives $\bbE|\qG_N(q)-\qG_N(q)|^2\to0$ as $N\to\infty$, and we conclude from (\ref{EqGN}) and Tchebychev's inequality that 
\begin{equation*}
\qG_N(q)\stackrel{\mathbb{P}}{\longrightarrow}\qG(q)\,.
\end{equation*}
\end{proof}
\begin{proof}[Proof of Theorem 3]
The proof is very similar to that of Theorem 2, so we give only the main steps of the derivation.\par
Fix $q=\lambda+\varepsilon j$ with $\varepsilon>d$. Again we write the shorthands $\bR_N=\bqR(q;D_NA_N)$ and $\qG_N(q)=\qG(q;D_NA_N)$. Let $\tbG_N=N^{-1}\sum_i(\bR_{N}\bD_N)_{ii}$. Applying the resolvent identity and Lemma 2, we obtain
\begin{equation*}
\begin{split}
\bbE\bR_N+\bq^{-1}&=\bq^{-1}\bD_N\bbE\bA_N\bR_N\\
&=-\bq^{-1}\bD_N\big(\bbE (\bt\cdot\tbG_N)\big)(\bbE\bR_N)+\bK_N,
\end{split}
\end{equation*}
Rearranging, we have
\begin{equation*}
\bbE\bR_N=-\big(\bbE (\bt\cdot\tbG_N)\big)^{-1}\Big(\bD_N+\bq\big(\bbE (\bt\cdot\tbG_N)\big)^{-1}\Big)^{-1}+\bK_N'\,,
\end{equation*}
where the resolvent bound gives $\|\bK_N'\|\to0$ as $N\to\infty$ also (it is here that assumption D3 and the requirement $\varepsilon>2d$ are used). Summing over the diagonal blocks, we deduce
\begin{equation}
\bbE\qG_N(q)=-\big(\bbE (t\cdot\tqG_N)\big)^{-1}\qG\Big(-q\big(\bbE (t\cdot\tqG_N)\big)^{-1};D_N\Big)+k_N\,,
\label{EG}
\end{equation}
where $\tqG_N$ is the quaternion with matrix representation $\tbG_N$, and $|k_N|\to0$ as $N\to\infty$. The equation for $\tbG$ is found similarly; let $\tbR_N=\bR_N\bD_N$, then
\begin{equation*}
\begin{split}
\bbE\tbR_N+\bq^{-1}D_N&=\bq^{-1}\bD_N\bbE\bA_N\bR_N\bD_N\\
&=-\bq^{-1}\bD_N\big(\bbE (\bt\cdot\tbG_N)\big)(\bbE\tbR_N)+\tbK_N\,.
\end{split}
\end{equation*}
and thus
\begin{equation}
\bbE\tqG_N=-q^{-1}\qG\left(-\big(t\cdot\bbE\tqG_N)\big)q^{-1};D_N^{-1}\right)+\widetilde{k}_N\,,
\label{EtG}
\end{equation}
where as usual $|\widetilde{k}_N|\to0$ as $N\to\infty$. The proof is completed in the same fashion as Theorem 2, with equations (\ref{EG}) and (\ref{EtG}) providing (\ref{prodtrue}) and (\ref{prod}), respectively.
\end{proof}

\section{Discussion}
The purpose of this paper has been to marry the simple approach to Hermitian RMT \cite{Pastur1999, Khorunzhy1993, Khorunzhy1996, Khorunzhy2001} to the tricks used in \cite{Feinberg1997,Feinberg1997a,Feinberg2001,Janik1997,Janik1997a,Jarosz2004,Jarosz2006} to deal with non-Hermitian matrices, thereby obtaining techniques with which to handle sums and products of random and deterministic matrices. As shown in Theorem 1 the resulting theory in fact applies to the mean spectral density of such matrices under a particular type of random perturbation. In practice it appears, as evidenced by the examples, that Theorems 2 and 3 can indeed be used to predict limiting spectral densities in the absence of this perturbation, though this aspect of the theory has not been rigorously proven. There are two main difficulties in the justification of the exchange of limits $N\to\infty$ and $\varepsilon\to0$ required to make the approach of the examples rigorous.\par
First, in the statement of both results we assume a minimum size for $\varepsilon$ with the purpose of making unique fixed points of equations (\ref{sum}) and (\ref{prod}) easily available. The same trick is used in the analogous theory of Green's functions of Hermitian matrices, however, in that case it is easily justified by appealing to analyticity; one simply determines the Green's function far from the real axis and uses analytic continuation to return. Things are not quite so straightforward in the quaternionic case. A similar argument is still possible if one notes that for fixed $X$, $\qG(\lambda+\varepsilon j;X)$ is a ratio of polynomials in $\varepsilon$ and that the zeros of the denominator are confined to the imaginary axis. One may feasibly then promote $\varepsilon$ to a complex variable on a strip containing the real line and apply analytic continuation as before. The only drawback here is conceptual; preserving the analyticity of $\qG$ in $\varepsilon$ effectively destroys the quaternionic analogy since we must work with $\varepsilon^2$ and not $|\varepsilon|^2$, as would result from the matrix representation (\ref{matrep}).\par
The other problem is more fundamental. For a sequence of matrices $\{X_N\}$, the convergence in probability of $\qG(\lambda+\varepsilon j;X_N)$ translates to the weak convergence of $\varrho_\varepsilon(\lambda;X_N)$ and does not necessarily reveal anything about the limiting behaviour of the unregularised densities $\varrho(\lambda;X_N)$. If $X_N$ happens to be normal, then it is straightforward to compute
\begin{equation*}
\varrho_\varepsilon(\lambda;X_N)=\frac{1}{\pi}\int_{\mathbb{C}}\left(\frac{\varepsilon}{\varepsilon^2+|\mu-\lambda|^2}\right)^2\varrho(\mu;X_N)\,d\mu\,.
\end{equation*}
Since, in this case, the smoothing of $\varrho_\varepsilon$ is independent of $N$, the weak convergence of $\varrho(\lambda;X_N)$ follows easily. Without normality the same cannot be said, in fact it is easy to construct examples for which $\lim_{\varepsilon\to0}\lim_{N\to\infty}\varrho_\varepsilon(\lambda;X_N)$ and $\lim_{N\to\infty}\varrho(\lambda;X_N)$ are entirely different\footnote{Banded Toeplitz matrices are a good choice.}. \par
As mentioned in the introduction, this issue is by no means new or unique to the quaternionic Green's function. Some authors have treated the problem carefully \cite{Bai1997,Pan2007,Gotze2007,Gotze2007a,Tao2007,Chafai2007}, usually by techniques that involve the bounding of the least singular values of the random matrices involved. Such methods may well be adapted to prove the convergence of densities $\varrho(\lambda;D_N+A_N)$ and $\varrho(\lambda;D_NA_N)$ to the limits predicted by the sum and product rules, thereby completing the work of the present paper.\par 
We should point out however, that Theorem 1 suggests the regularised density, for small $\varepsilon$ or not, is an interesting and potentially useful object which is well worthy of study in its own regard. This result also offers a strong heuristic argument for the correctness of the techniques used in the examples - if we are dealing with very large and fully random matrices, the addition of an infinitesimal random perturbation should not change the spectral density.\par
There is some room for improvement in Theorems 2 and 3, both in relaxing the conditions and strengthening the results. For the sake of simplicity and brevity, we assume in A3 the bound $\bbE|\xi_{ij}|^3<C_\xi$, which could almost certainly be dropped in favour of a weaker condition, or possibly forgotten entirely as in \cite{Tao2007}. In a similar vein assumption D3 may be extraneous, as suggested by the example. Lastly, it is possible that the convergence in probability of $\qG$ may be traded up for almost sure convergence, which in turn would provide strong rather than weak convergence of $\varrho_\varepsilon$, however this may well require entirely different methods.\par
The final opportunity for future research worth mentioning is the conjectured `Spherical law', a proof of which, by any method, would be very interesting.

\subsection*{Acknowledgements}
The author would like to thank Isaac P\'{e}rez Castillo for his continued advice and support, and Adriano Barra for important initial discussions.

\newpage
\bibliographystyle{phcpc}
\bibliography{/home/tim/Spectra/Papers/BibTex/Spectra}
\end{document}